\begin{document}
	
	
	\title{Efficient Suspected Infected Crowds Detection Based on Spatio-Temporal Trajectories}

	
	
	%
	%
	%
	%

	\numberofauthors{6}
	\author{
		%
		%
		\alignauthor
		Huajun He\\
		\affaddr{Southwest Jiaotong University}\\
		\affaddr{Chengdu, China}\\
		\email{hehuajun@my.swjtu.edu.cn}
		\alignauthor
		Ruiyuan Li\\
		\affaddr{Xidian University}\\
		\affaddr{Xi'an, China}\\
		\email{ruiyuan.li@jd.com}
		\alignauthor
		Rubin Wang\\
		\affaddr{Southwest Jiaotong University}\\
		\affaddr{Chengdu, China}\\
		\email{wangrubin3@jd.com}
		\and
		\alignauthor 		Jie Bao\\
		\affaddr{JD Intelligent Cities Research}\\
		\affaddr{Beijing, China}\\
		\email{baojie3@jd.com}
		\alignauthor 		Yu Zheng\\
		\affaddr{JD Intelligent Cities Research}\\
		\affaddr{Beijing, China}\\
		\email{zheng.yu@jd.com}
		\alignauthor 		Tianrui Li\\
		\affaddr{Southwest Jiaotong University}\\
		\affaddr{Chengdu, China}\\
		\email{trli@swjtu.edu.cn}
			}
%


	\maketitle
	
\begin{abstract}
	Virus transmission from person to person is an emergency event facing the global public. Early detection and isolation of potentially susceptible crowds can effectively control the epidemic of its disease. Existing metrics can not correctly address the infected rate on trajectories. To solve this problem, we propose a novel spatio-temporal infected rate (IR) measure based on human moving trajectories that can adequately describe the risk of being infected by a given query trajectory of a patient. Then, we manage source data through an efficient spatio-temporal index to make our system more scalable, and can quickly query susceptible crowds from massive trajectories. Besides, we design several pruning strategies that can effectively reduce calculations. Further, we design a spatial first time (SFT) index, which enables us to quickly query multiple trajectories without much I/O consumption and data redundancy. The performance of the solutions is demonstrated in experiments based on real and synthetic trajectory datasets that have shown the effectiveness and efficiency of our solutions.
\end{abstract}

\section{Introduction}

Human to human virus-borne infections has been a public health concern. A new coronavirus was named "SARS-CoV-2," and the disease it caused was called "Coronavirus Disease 2019" (abbreviated as "COVID-19")~\cite{zhao2020comparative}. COVID-19 pandemic is an urgent emergency facing the world. In the absence of a vaccine, early detection, early reporting, early isolation, and early treatment~\footnote{http://dwz1.cc/mAz79Dq} have proven to be the most effective measures to prevent the spread of the epidemic. 

With the rapid development of mobile internet and locate service, massive spatio-temporal data have been generating from applications. Human activity trajectory is a typical spatio-temporal data, including longitude, latitude, and time. Given a trajectory $Q$ of the confirmed patient, Suspected Infected Crowds Detection (SICD) aims to detect close contacts through the spatio-temporal correlation of trajectories. As shown in Figure~\ref{fig:sic}, we search the ordinary people who have occurred within the spatio-temporal range which can be infected with the location where the patient has appeared and then determine the probability of infection rate based on their contact distance and duration. SICD helps local governments to investigate suspected people and find close contacts, isolate and protect them in time to prevent further spread of the epidemic. For improving the accuracy of SICD, it is necessary to consider the spatio-temporal correlation in each location of Q to describe the infection rate. The data volume of the underlying trajectory database used for SICD is enormous. To avoid massive memory consumption, we leverage a spatio-temporal index to manage trajectories in the NoSQL database.

In the epidemiological analysis, analyzing the relationship between people in spatial and temporal is a very standard and important analytical method. By looking at the spatial and temporal relationship, we can draw accurate close contact conclusions. In the 19th century, Snow~\cite{vinten2003cholera}, studied spatio-temporal data such as maps and found that the source of pollution in cholera cases was not air, but from public pumps on Broad Street and transmitted through contaminated drinking water. At his appeal, authorities closed and diverted pump valves to control cholera. The successful prevention of cholera is directly related to the result of spatio-temporal data analysis and is the most classic example of spatio-temporal big data analysis.

We can acquire the multiple spatio-temporal data information related to the risk of infection within the given spatio-temporal ranges through the trajectory of human activity. Moving trajectory is typical spatio-temporal data. By investigating the patient's moving trajectory, we can know who the patient is in close contact. In response to this demand, Tang discovered object groups that travel together from streaming trajectories~\cite{tang2012discovery}. In his study, in order to find travel companions, the system needs to cluster the objects of each snapshot of the query trajectory and intersect the clustering results, and retrieve the objects that move together. It is an efficient system with high precision, and it can discover crowds with a long companion. However, it can not find crowds whose local companion is long in one snapshot but total companion is short in all snapshots. Besides, many scholars have done much research on the spatial and temporal similarity between trajectories. Some of these methods mainly focus on how to extend existing trajectory similar search algorithms (e.g., ED, DTW, and Fréchet~\cite{toohey2015trajectory}). They have excellent performance for detecting duplicated or redundant trajectories in the database. However, to ensure accuracy, they require the consistency of the sampling rate of the two trajectories to be relatively high, and they lack the anti-noise ability. Some other trajectory similarity algorithms study the spatio-temporal correlation. They calculate spatial and temporal correlations separately and then combine them into overall similarity.  A liner combination method (e.g.,~\cite{shang2014personalized}) combines the spatial and temporal into a spatio-temporal similarity metric. Other existing trajectory similarities (e.g.,~\cite{bakalov2005efficient,bakalov2006continuous}) use a time interval to limit the resemblance of two trajectories. Unfortunately, they calculate similarity on the entire trajectory. However, in the actual data set, the human trajectory is not always successive and may has a larger spatial and temporal range. Therefore, in many applications, the trajectory must be segmented. However, the local segment similarity of the trajectory is not comprehensive in existing metrics. Hence, in this paper, we propose a measure that weights every segment of the trajectory, because the longer a segment of the query trajectory stays, the more possible that others will meet it. In each segment, we consider not only spatial nearing but also temporal proximity in every location, which makes sure the spatio-temporal closed location has a high infected rate.
\begin{figure}
	\centering
	\includegraphics[width=3.0in]{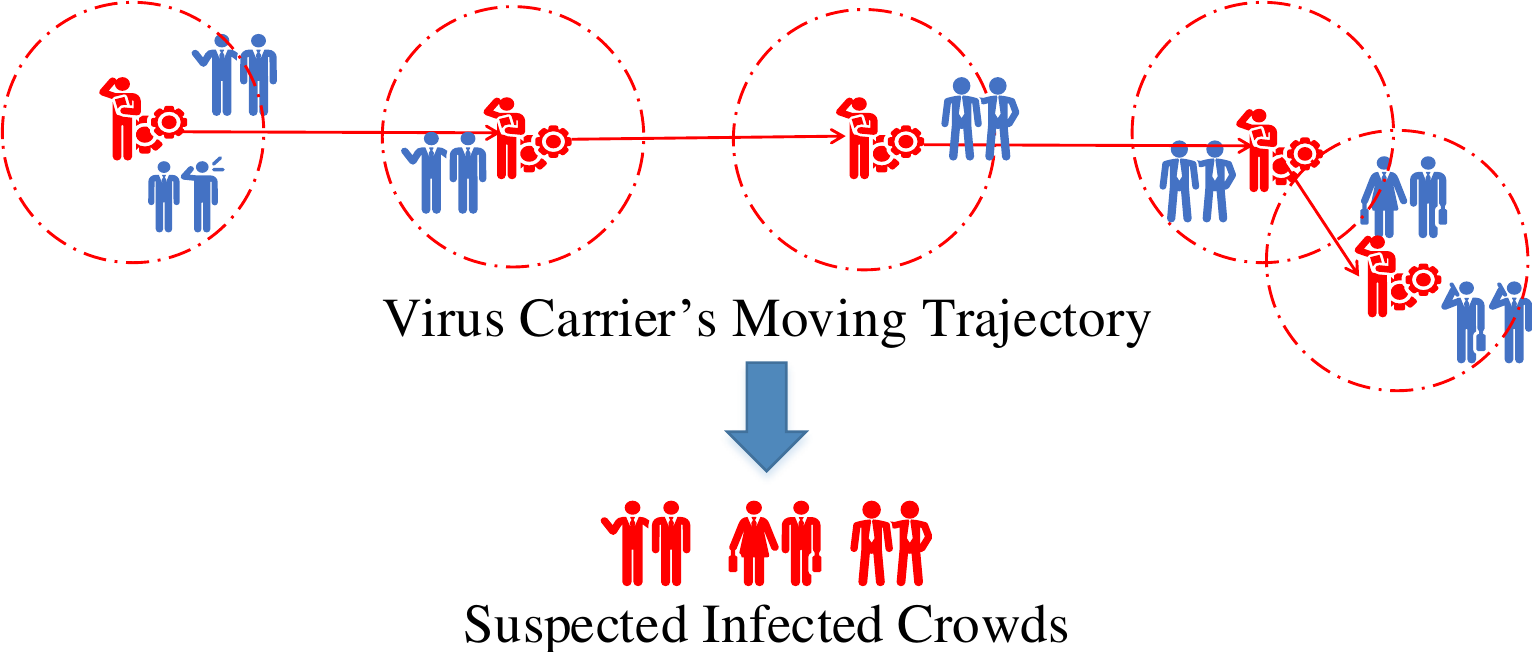}
	\caption{An example of Suspected Infected Crowds Detection.}
	\label{fig:sic}
\end{figure}
COVID-19~\cite{zhao2020comparative} is an urgent emergency facing the world. It has an incubation period. Thus, people infected with COVID-19 do not immediately have severe symptoms after they infect the human body, with an average of 5 to 6 days and a range of 1 to 14 days. Coronavirus infection during the incubation period is also infectious. Therefore, it is necessary to confirm all GPS records of the confirmed diagnosis from the prior incubation period, which is a large amount of data with a large spatio-temporal range. However, most of the existing solutions should load all data into memory, which limits the scalability. In this paper, we first divide the long and large trajectory into several sub-segments with suitable length and spatio-temporal range. Then, we build an XZ2T~\cite{li2020just} index to manage large segments in the NoSQL database via an efficient platform JUST~\cite{li2020just}, which guarantees the scalability of our solutions. Many methods mainly focus entirely on similarity. Thus, they are slow due to the large consumption for calculating big trajectory similarity. Therefore, recent researches (e.g.,~\cite{shang2014discovery,shang2017trajectory,shang2014personalized,ta2017signature}) have focused on some pruning strategies to save time. They build indexes on trajectories to avoid computing all trajectories similarity, which greatly accelerates query speed. In this paper, we only search the smaller spatio-temporal range of each segment to reduce the range of the candidate set and propose some pruning strategies, which help reduce the computation. In most scenarios, it is always to query close contacts for multiple patients. Thus, we build an efficient SFT index to reallocate segments with the similar spatio-temporal range together, which effectively reduces the I/O cost and data redundancy.

Using our algorithm, we helped Beijing find more than 500 high-risk close contacts within 20 days. Until March 1, we assisted Suqian in discovering a quarter of newly diagnosed patients with COVID-19 in the city. Within China, 18 provinces and cities such as Guangzhou, Nanjing, and Chengdu used this algorithm as part of the high-risk population analysis system.

To sum up, the contributions of this paper are as follows.
\begin{itemize}
	\item We propose a new infection rate (IR) metric that takes into account both the spatial and temporal proximity in all segments of trajectory and is suitable for the Suspected Infected Crowds Detection (SICD). It can also use to recognize similar trajectories, detect close contacts, mine companion, and monitor high-risk groups.
	\item We store primary trajectories in the NoSQL database and only need to search a small spatio-temporal range data when it comes to a query trajectory, which reduces the memory consumption and guarantees the scalability of our solutions.
	\item We leverage some effective pruning strategies to avoid many invalid calculations. 
	\item We develop an SFT index to reduce I/O communication and data redundancy. 
	\item We conduct extensive experiments on trajectory sets to validate the performance of the proposed algorithms.
\end{itemize}

The rest of the paper is organized as follows. Section 2 introduces the basic definitions and trajectory infected rate. The framework of our solution is presented in Section 3. The trajectory infected rate query is described in Section 4, while the trajectory infected rate join query for multiple patients is in Section 5. The experimental results are presented in Section 6. Related work is illustrated in Section 7, and conclusions and future works are shown in Section 8.

\section{PRELIMINARIES}
In this section, we introduce the basic definitions and spatio-temporal operations in our present approaches.

\subsection{Trajectory}
The trajectory is a typical representation of a set of spatio-temporal locations for the same user. A location in the trajectory is of the form (longitude, latitude, time), and all locations of trajectory are sorted by its timestamp. A trajectory is defined as follows.
\newdef{definition}{Definition}
\begin{definition}
	A trajectory $T$ of moving object is a time-ordered locations $<l_1,l_2,...,l_n>$, where location $l_i = (p_i,t_i)$, $ i \in [1,n]$, with $p_i$ is a spatial point, $t_i$ is a timestamp, and $n$ is the location size.
\end{definition}
\subsection{Segmentation}
In order to fully depict the virus carrier's infected trajectory, we must collect all of its spatio-temporal records generated by GPS terminals from the incubation period to isolation. However, records may not be collected continuously, such as when the terminal is shut down for a while. Thus, the spatial or temporal interval between the two nearest records may be extremely large. If the entire trajectory is stored as a whole, a sizeable spatio-temporal range is required to contain this trajectory, and collected records may be intermittent that two nearest GPS records cannot be directly connected. Meanwhile, the spatio-temporal nearest locations always have similar characteristics. Storage together can improve the efficiency of the infected rate calculation because data redundancy can be avoided. Therefore, the trajectory must be segmented. In this paper, we use the stay point detection algorithm~\cite{cloudtp} to segment the trajectory. As shown in Figure~\ref{fig:segment}, we divide trajectory into four segments marked with red boxes, where the spatial distance and time interval between any two locations in any segment do not exceed fixed thresholds (e.g., 200m and 30minutes in Figure~\ref{fig:segment}), respectively.
\begin{figure}
	\centering
	\includegraphics{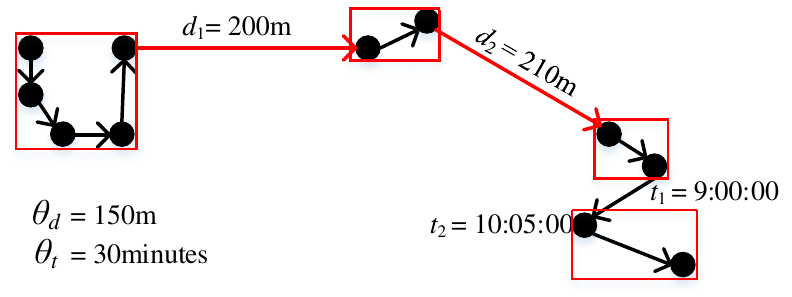}
	\caption{A sample of segmentation.}
	\label{fig:segment}
\end{figure}

\begin{definition}
	A trajectory  $T$ can be represented by segments. $ T = <s_1,s_2,...,s_m>$, where $m$ is the segment size of this trajectory, segment $s_i = <l_{i1},l_{i2},...,l_{ik}>$ is a sub-trajectory, and segments are sorted by the timestamps of their start locations.
\end{definition}

\subsection{Spatiotemporal Operations}
In this paper, we focus on spatio-temporal operations over virus carrier's moving trajectories to link their close contact users. Thus, in this part, we discuss spatio-temporal operations of our propose approaches.

Carriers of COVID-19 affect people with whom they have close contact. Therefore, the factors of infected rate are related to the spatio-temporal distance between the ordinary person and the patient. Thus, each position where the carrier appears has an influential spatio-temporal range.  
\begin{definition} 
	Given a location $l$, a spatial infected range threshold $\theta_d$ and a temporal infected range threshold $\theta_t$. $STR(l,\theta_d,\theta_t)$ represents an influential spatio-temporal range of location $l$. Formally,
	$$STR(l,\theta_d,\theta_t) = \{R|\forall r \in R(|r.t - l.t| \le \theta_t \land dist(r.p,l.p) \le \theta_d)\}$$
\end{definition}

In the range $STR(l,\theta_d,\theta_t)$ of location $l$, we calculate spatio-temporal correlation of $l$ for trajectory.

\begin{definition} Given a location $l$, a trajectory $T$, a distance threshold $\theta_d$ and a time threshold $\theta_t$. The spatio-temporal correlation is defined as follows:
	\begin{equation}
	st\_cor(l,T) = \max_{v \in T \land v \in STR(l,\theta_d,\theta_t)}{st\_dist(l,v)}
	\label{st_cor},
	\end{equation}
\end{definition}
where $st\_dist(l,v)$ is the spatio-temporal correlation between $l$ and a location $v$ of trajectory $T$. Formally,
\begin{equation}
st\_dist(l,v) = \lambda e^{-\frac{dist(l.p,v.p)}{\theta_d}} + (1-\lambda)e^{-\frac{|l.t - v.t|}{\theta_t}}
\label{st_cor_s},
\end{equation}
where parameter $\lambda \in [0,1]$ controls the relative importance of the spatial and temporal correlation. $\frac{dist(l.p,v.p)}{\theta_d}$ and $\frac{|l.t - v.t|}{\theta_t}$ normalize the effects of spatial distance and time interval to the same range. Note, while trajectory $T$ does not intersect with $STR(l,\theta_d,\theta_t)$, the spatiotemporal correlation $st\_cor(l, T)$ is 0.

\subsection{Trajectory Infected Rate}
\subsubsection{Segment Infected Rate}
Given a segment $s$ of virus carrier's trajectory and a trajectory $T$, the infected rate between $s$ and $T$ is defined as follows:
\begin{equation}
IR(s, T) = \frac{\sum_{l \in s}{st\_cor(l,T)}}{|s|}
\label{IR_s},
\end{equation}
where $l$ is a location of $s$ and $|s|$ represents the number of locations that $s$ owns.
\subsubsection{Trajectory Infected Rate}
Given a virus carrier's query trajectory $Q$ and a trajectory $T$, the infected rate between $Q$ and $T$ is defined as follows:
\begin{equation}
IR(Q, T) = \sum_{i=1,s_i \in Q}^{m}{P(s_i) * IR(s_i,T)}
\label{IR_T},
\end{equation}
where $s_i$ is a segment of $Q$ and $P(s_i)$ represents the potential infected probability of each segment in $Q$. The probability $P(s_i)$ is determined by time span in segment, on account of more time the carrier stays more risk of infection the others may have. Therefore, $P(s_i)$ is defined as follows:
\begin{equation}
P(s_i) = \frac{s_i.et - s_i.st + 1}{\sum_{j=1}^{m}{(s_j.et - s_j.st + 1)}},
\label{ps}
\end{equation}
where $m$ represents the number of segments in the query trajectory $Q$, $s_i.st$ represents the start time of segment and $s_i.et$ represents the end time of segment, respectively.   

\subsection{Problem Definitions}
Given a query trajectory $Q$, a set of trajectories $\mathbf{T}$ and a threshold $\theta$, the trajectory infected rate query(IRQ) finds a set of trajectories $\mathbf{T}'$ from the set whose trajectory infected rate exceeds $\theta$, i.e., $\forall T \in \mathbf{T}'(IR(Q,T) > \theta)$.

Given a set of query trajectories $\mathbf{Q}$, a set of trajectories $\mathbf{T}$ and a threshold $\theta$, the trajectory infected rate join query(IRJQ) finds a set of all trajectory pairs from the two sets whose trajectory infected rate exceeds $\theta$, i.e., $\forall (Q_i,T_j) \in \mathbf{Q \Join T}(IR(Q_i,T_j) > \theta)$.

\section{Framework}

Figure~\ref{fig:framework} depicts the architecture of the trajectory infected rate query, which consists of three processes: Data Preprocessing, Indexing and Storing, and Infectivity Query. 
\begin{figure}
	\centering
	\includegraphics[width=3in]{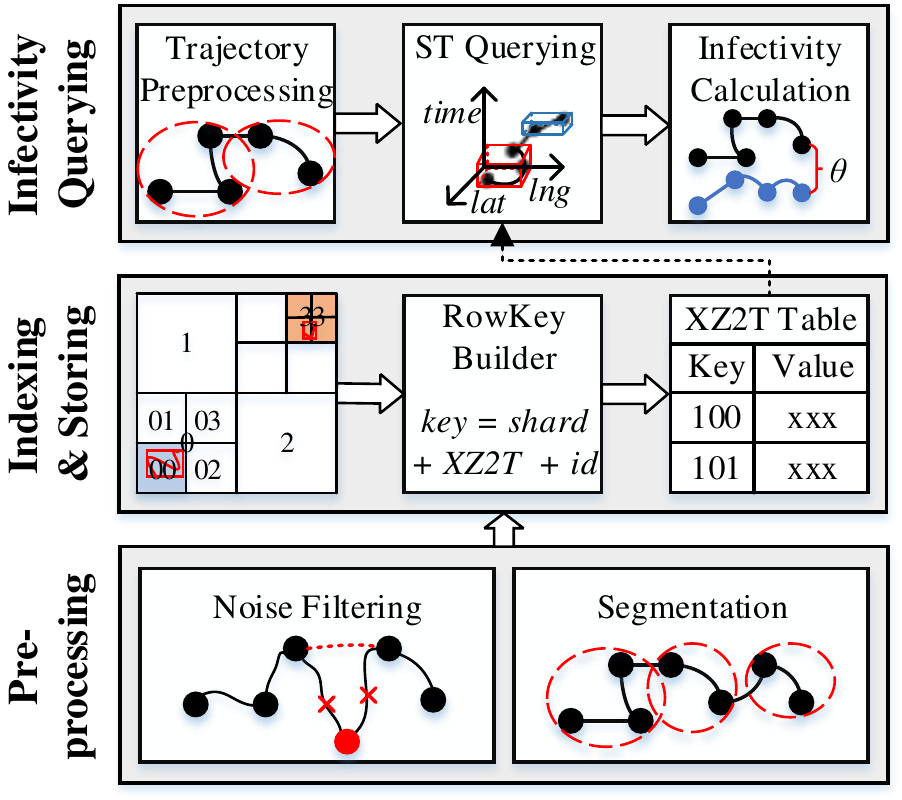}
	\caption{The framework of our solutions.}
	\label{fig:framework}
\end{figure}

\subsection{Data Preprocessing}
In many applications, trajectory preprocessing is not only necessary for filtering noise but also crucial for indexing and storing. As depicted in the bottom-most box of Figure~\ref{fig:framework}, the process of preprocessing contains two main tasks: 1) noise filtering, which eliminates outlier GPS records that may be caused by the weak signal of GPS terminals; 2) segmentation, which breaks a long trajectory into suitable short trajectories. This paper mainly focuses on the trajectory indexing and infectivity query. For more details about trajectory preprocessing, please refer to our previous work~\cite{cloudtp}.

\subsection{Indexing and Storing}
As shown in the middle box of Figure~\ref{fig:framework}, we use the XZ2T index to organize the segment and then store it as a table into the NoSQL database via JUST~\cite{li2020just}, which can efficiently and conveniently manage big spatio-temporal data. 

Spatio-temporal range query is a necessary step in our algorithm. Indexing is essential for the processing of spatio-temporal query. Thus, we build an XZ2T index on the segment to effectively support the spatio-temporal range query. 

XZ2T index is an extension of the XZ2 index~\cite{boxhm1999xz}, which projects a geographical polygon with a time range onto a one-dimensional value. XZ2 index is based on XZ-Ordering, a Space-Filling Curve for Spatially Extended Objects. It uses a sophisticated coding scheme for a polygon, which maps the polygon into the integer domain. As shown in Figure~\ref{fig:xz2t} (a), XZ2 index divides the root element into four sub-elements with equal size, which are numbered from 0 to 3. Then, the XZ2 index recursively numbers each sub-space until the maximum resolution is reached. Finally, we can get a sequence formed by successively traversing numbers. A polygon is represented by the most appropriate element or xelement of the xz2 index, which can completely cover the polygon. The xelement is an enlarged area of the element in xz2 index (i.e., the xelement of ``210" represents the area covered by the element ``21", and the width and height of $t_2$ are lower than ``210". Thus, instead of ``21", we can use ``210" represents $t_2$. Similarly, ``12" represents $t_1$, ``221" represents $t_3$, ``0" represents $t_4$, respectively). However, XZ2 index only supports spatial data. Therefore, considering the time dimension, the XZ2T index is designed, which allocates each disjoint period an XZ2 index, as shown in Figure~\ref{fig:xz2t} (b). Specifically, given a segment's spatio-temporal range $(mbr,st, et)$, we first calculate the period number $bin$ of $st$ according Equ (\ref{bin}), then calculate its XZ2 index number by using XZ-ordering function $XZ2(mbr)$. Finally, we combine the period number and XZ2 index to indicate the spatio-temporal range of segment. Note, our segmentation algorithm guarantees the span of $et-st$ not greater than $periodLen$ and $et$ belongs to the same $bin$ of $st$.  
\begin{equation}
bin = Unit.between(t, epoch) / periodLen
\label{bin}
\end{equation}

In Equ (\ref{bin}), $epoch$ is the reference time (e.g., 1970-01-01T00:00:00), and $periodLen$ is the span of a period. $Unit$ represents the unit of time (e.g., day, week, month, year).

To support offline infectivity query and avoid all trajectories is persisted in memory because memory resources are expensive and insufficient, we store segments of trajectory to NoSQL database (e.g., HBase) via JUST. The key of our table is consisted by a shard, XZ2T index, and sid:
$$key = shard + XZ2T(s) + sid,$$
where $shard$ is a hash number to achieve load balance; $XZ2T$ encodes the segment's spatio-temporal information; $sid$ is the unique id of each segment. 
\begin{figure}
	\centering
	\includegraphics[width=3.3in]{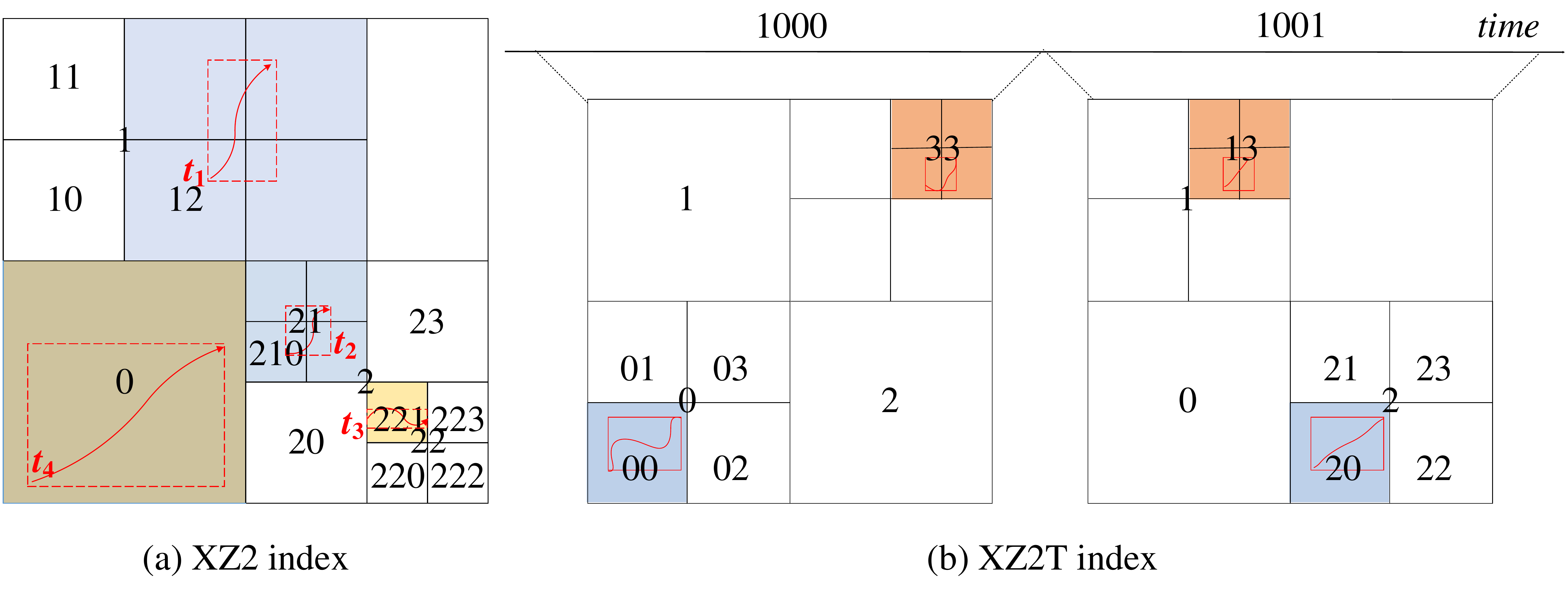}
	\caption{The example of XZ2 and XZ2T index.}
	\label{fig:xz2t}
\end{figure}

\subsection{Infectivity Querying}
The top-most box of Figure~\ref{fig:framework} shows two tasks of infectivity query, including Infected Rate Query(IRQ) and Infected Rate Join Query(IRJQ). IRQ finds trajectories that had close contact with a virus carrier (see Section 4). IRJQ finds trajectory pairs from two sets whose infectivity exceeds $\theta$ (see Section 5).  In the following sections, we introduce the details of IRQ and IRJQ. 
	
\section{Infection Rate Query}
\subsection{Main idea}
We propose a spatio-temporal correlation-based infectivity query. First, we break the long query trajectory $Q$ to suitable segments $S$. Second, we extract and extend the spatio-temporal range of each segment. Third, we query all infected segments covered by the spatio-temporal ranges of $S$ from database and aggregate the same trajectory's segments together (Section 4.2). Furthermore, we prune the trajectories whose infected rate can not greater than $\theta$. After pruning, we calculate the infectivity between the remain trajectories and $Q$. Finally, we filter the trajectories whose infected rate is lower than $\theta$ out and return the results (Section 4.3).
\subsection{ST Query}
\subsubsection{Main idea}
\begin{figure}
	\centering
	\includegraphics[width=3in]{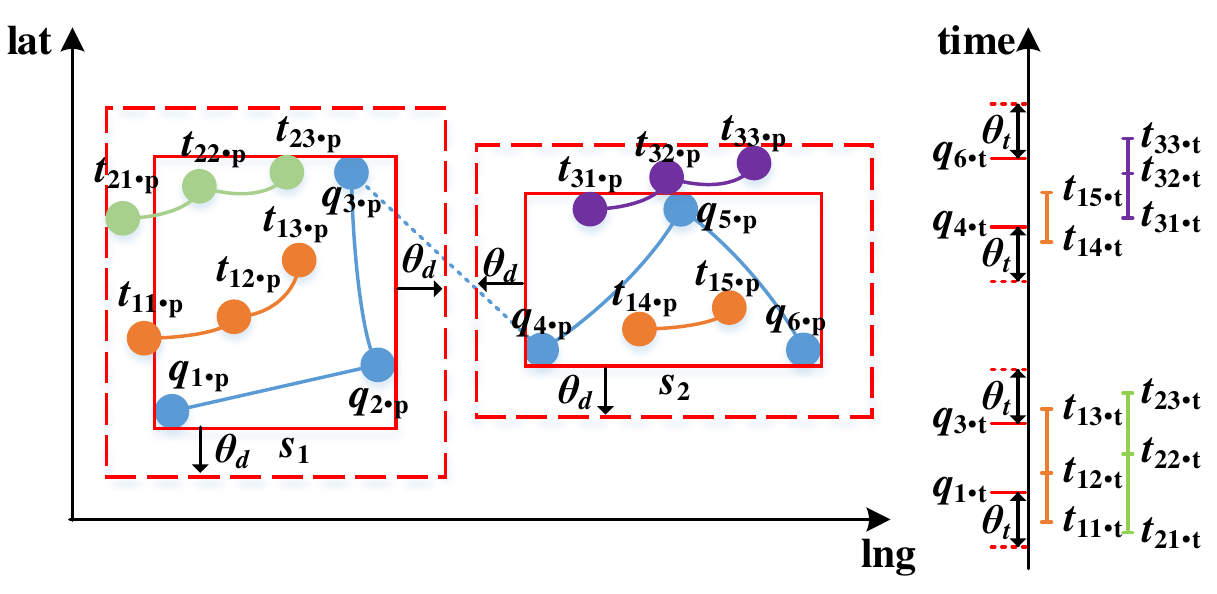}
	\caption{A sample of spatio-temporal query.}
	\label{fig:base_query}
\end{figure}
In this solution, we must query segments from database with XZ2T-Index around each location of the query trajectory. However, there are many locations need to access database that cause the massive queries and data redundancy. Thus, we cluster close locations to a spatio-temporal range and use this range to query the data once from the database. 

The raw long trajectory may need a large spatial and temporal range to cover it entirely. For example, a trajectory may go to many places for ten days. Hence, we break the trajectory where the spatial distance and time interval of any two nearest locations exceeds the fixed thresholds. As shown in Figure~\ref{fig:base_query}, the query trajectory $Q$ (the trajectory with blue in  Figure~\ref{fig:base_query}) is divided into two segments ($s_1$ and $s_2$). Each segment has a $mbr$ (minimum bounding rectangle, e.g., the red solid line rectangle of Figure~\ref{fig:base_query}) that covers all locations and a time range that starts from the first location's time and ends of the last location's time (e.g., the solid red lines between $q_1.t$ and $q_3.t$ in Figure~\ref{fig:base_query}). Then, we extend the $mbr$ outward $\theta_d$ (e.g., the red dotted line rectangle in Figure~\ref{fig:base_query}) and enlarger the time range of $\theta_t$ (e.g., the red dotted lines near $q_1.t$ and $q_3.t$ in Figure~\ref{fig:base_query}), where $\theta_d$ and $\theta_t$ control the infected spatio-temporal range. Later we query segments from the database via extended spatio-temporal ranges. As shown in Figure~\ref{fig:base_query}, the query trajectory $Q$ is divided into two segments $s_1$ and $s_2$. Segments of trajectory $t_1$ and $t_2$ are queried by $s_1$ and  $t_1$ and $t_3$ are queried by $s_2$, respectively. We group segments of the same trajectory together. Finally, $t_1,t_2$ and $t_3$ are the candidate trajectories of the query trajectory $Q$. 

\subsubsection{Query from XZ2TTable}
We store data in the NoSQL database. Each segment is saved as the form of ``(key, value)" with file dictionary sort index, and the key consists of XZ2T index and other information. Thus, generating accurate and smaller key scan ranges for query processing can significantly reduce I/O costs. 

\textbf{Key Scan Ranges Generation.} 
First, we give a spatio-temporal range of the query segment, which represents as a time range and a spatial range. Second, we extract the period numbers overlapped by the time range. Third, we generate the spatial scan ranges calculated by the XZ2 index~\cite{boxhm1999xz}. Later, for each period number, we execute the scans. Finally, we refine the result to make sure that exactly in the spatio-temporal field. 

The spatial scan range is generated as follows:  (1) Starting the recursive access from the root node by the breadth-first search; (2) if the current node intersects the spatial query window partly, the index value of this node is added to the scan queue and  recursive access child nodes until arriving the max resolution; (3) if the query window completely covers the current node, we put the index range represented by it and all its child nodes into the scan queue; (4) when the leaf nodes intersect, the index values of the leaf nodes are put into the queue; (5) if there is no intersecting node, the node is skipped directly. Finally, we combine the consecutive values in the scan queue to form the final scan range.

\subsection{Pruning}
The calculation of IRQ is time-consuming. Therefore, we develop several pruning strategies to avoid unnecessary calculations.
\newtheorem{theorem}{Lemma}
\begin{theorem}
	Let $s$ represents a segment of query trajectory $Q$ whose extended spatio-temporal range intersects with the candidate trajectory $T$. $T$ must satisfy Equ~(\ref{theorem1}):
	\begin{equation}\sum_{s \in Q\cap T }P(s) \ge \theta
	\label{theorem1}
	\end{equation}
	\label{Lemma1}
\end{theorem}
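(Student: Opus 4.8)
The plan is to prove the stronger inequality $IR(Q,T)\le \sum_{s\in Q\cap T}P(s)$ and then combine it with the defining property of a result trajectory, namely $IR(Q,T)>\theta$. The argument is a short chain of bounds, the only conceptual ingredient being a containment fact about the extended ranges used to fetch candidates.

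First I would establish the pointwise ceiling $st\_dist(l,v)\le 1$. In Equ~(\ref{st_cor_s}) both exponents $-dist(l.p,v.p)/\theta_d$ and $-|l.t-v.t|/\theta_t$ are non-positive, so each exponential is at most $1$; since $\lambda\in[0,1]$ the result is a convex combination of two quantities in $[0,1]$, hence lies in $[0,1]$. Therefore $st\_cor(l,T)=\max st\_dist(l,v)\le 1$ by Equ~(\ref{st_cor}), and consequently $IR(s,T)\in[0,1]$ for every segment $s$ by Equ~(\ref{IR_s}), being an average of numbers in $[0,1]$.

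The key step is a geometric containment: for every location $l$ of a segment $s_i$ of $Q$, the influential range $STR(l,\theta_d,\theta_t)$ is entirely contained in the \emph{extended} spatio-temporal range of $s_i$ (the $mbr$ of $s_i$ dilated outward by $\theta_d$, with the time interval $[s_i.st,s_i.et]$ grown by $\theta_t$ on each side). Indeed, $l.p$ lies inside $mbr(s_i)$, so every point within spatial distance $\theta_d$ of $l.p$ lies in the $\theta_d$-dilation of $mbr(s_i)$; and $l.t\in[s_i.st,s_i.et]$, so $[l.t-\theta_t,\,l.t+\theta_t]\subseteq[s_i.st-\theta_t,\,s_i.et+\theta_t]$. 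Hence, if the extended range of $s_i$ does not intersect $T$, then no $STR(l,\theta_d,\theta_t)$ with $l\in s_i$ intersects $T$, so $st\_cor(l,T)=0$ for all $l\in s_i$ by the note following Equ~(\ref{st_cor_s}), and thus $IR(s_i,T)=0$. I expect this containment — and the matching of the ``extended range'' used for candidate retrieval with the $dist$, $\theta_d$, $\theta_t$ used inside $STR$ — to be the one place needing care; everything else is arithmetic.

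Finally I would assemble the pieces. In the expansion $IR(Q,T)=\sum_{i=1}^{m}P(s_i)\,IR(s_i,T)$ of Equ~(\ref{IR_T}), the previous step kills every term whose segment does not intersect $T$, so only segments in $Q\cap T$ survive, and each surviving term is at most $P(s_i)$ since $IR(s_i,T)\le 1$; therefore $IR(Q,T)\le\sum_{s\in Q\cap T}P(s)$. Since $T$ is retained precisely when $IR(Q,T)>\theta$, we conclude $\sum_{s\in Q\cap T}P(s)\ge IR(Q,T)>\theta$, which gives Equ~(\ref{theorem1}). The contrapositive is exactly the pruning rule: if the segments of $Q$ that reach $T$ carry total probability below $\theta$, then $T$ cannot be an answer and may be discarded before any $IR$ computation.
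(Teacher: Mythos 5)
Your proposal is correct and follows essentially the same route as the paper's proof: bound $IR(Q,T)\le\sum_{s\in Q\cap T}P(s)$ by using $IR(s,T)\in[0,1]$ together with $IR(s,T)=0$ for segments not intersecting $T$, then invoke the threshold condition $IR(Q,T)>\theta$. The only difference is that you spell out what the paper dismisses as ``clearly'' (the $[0,1]$ bound via the exponentials and $\lambda$-combination, and the containment of each $STR(l,\theta_d,\theta_t)$ in the segment's extended range), which strengthens rather than changes the argument.
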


\begin{proof}
	Clearly $IR(s,T) \in [0,1]$. If $T$ does not intersect with $s$, $IR(s,T)$ equals 0, and the maximum value of $IR(s,T)$ is 1. Thus, based on Equ~\ref{IR_T}, we have that
	
	\begin{math}
	\begin{aligned}
	IR(Q, T) &= \sum_{s \in Q}{P(s) * IR(s,T)} \\
	& \leq  \sum_{s \in Q\cap T}P(s)* IR(s,T) + \sum_{s \in Q \not\cap T }P(s)* IR(s,T)\\
	& \leq \sum_{s \in Q\cap T}P(s)
	\end{aligned}
	\end{math}
	
	Thus, if $\sum_{s \in Q\cap T}P(s) < \theta$, the $IR(Q,T) < \theta$. Therefore, $\sum_{s \in Q\cap T}P(s)$ must be equal or greater than $\theta$.
	\label{proof1}
\end{proof}

\begin{theorem} The $IR(s, T)$ must satisfy Equ (\ref{l2}):
	\begin{equation}IR(s,T) \ge \frac{\theta -1 + P(s)}{P(s)}
	\label{l2}
	\end{equation}
	\label{theorem2}
\end{theorem}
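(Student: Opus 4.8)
The plan is to mirror the argument used for Lemma~\ref{Lemma1}, but this time isolating a \emph{single} segment $s$ of $Q$ rather than grouping all intersecting segments together. Starting from the definition in Equ~(\ref{IR_T}), I would split off the contribution of $s$: $IR(Q,T) = P(s)\,IR(s,T) + \sum_{s_i \in Q,\ s_i \neq s} P(s_i)\,IR(s_i,T)$. This is just a regrouping of the defining sum, so it needs no justification beyond the definition itself.

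Next I would bound the tail sum from above. Each factor satisfies $IR(s_i,T) \in [0,1]$, as already observed in the proof of Lemma~\ref{Lemma1}, so $\sum_{s_i \neq s} P(s_i)\,IR(s_i,T) \le \sum_{s_i \neq s} P(s_i)$. By Equ~(\ref{ps}) the segment probabilities of $Q$ form a probability distribution, i.e. $\sum_{s_j \in Q} P(s_j) = 1$, hence $\sum_{s_i \neq s} P(s_i) = 1 - P(s)$. Combining these, $IR(Q,T) \le P(s)\,IR(s,T) + 1 - P(s)$.

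Finally, for $T$ to survive as an answer we need $IR(Q,T) > \theta$, so in particular $IR(Q,T) \ge \theta$ is necessary. Substituting the upper bound just derived and solving the resulting linear inequality for $IR(s,T)$ (dividing by $P(s) > 0$) yields $IR(s,T) \ge \frac{\theta - 1 + P(s)}{P(s)}$, which is Equ~(\ref{l2}). There is no real obstacle here: the whole argument is the one-line separation of $s$ plus the fact that the $P(s_i)$ sum to $1$. The only point worth flagging in the write-up is that the bound is \emph{vacuous} when $\theta - 1 + P(s) \le 0$ (equivalently $P(s) \le 1 - \theta$), since $IR(s,T) \ge 0$ holds trivially; thus the lemma is effective as a pruning rule only for segments with a long enough time span that $P(s) > 1 - \theta$, which is the regime where it actually discards candidates.
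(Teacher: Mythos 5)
Your proof is correct and follows essentially the same route as the paper's: isolate the term $P(s)\,IR(s,T)$ in Equ~(\ref{IR_T}), bound the remaining terms by $1-P(s)$ using $IR(s_j,T)\le 1$ and $\sum_j P(s_j)=1$ from Equ~(\ref{ps}), and take the contrapositive against the threshold $\theta$. Your added remark that the bound is vacuous when $P(s)\le 1-\theta$ is a sensible observation about when the pruning rule is effective, but it is not needed for the lemma itself.
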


\begin{proof} 
	Let $s_i$ represents a segment of $Q$. Based on Equ~\ref{ps}, we have $\sum_{i=1}^{m}P(s_i) =1$. Then by Equ~\ref{IR_T}, we have
	
	\begin{math}
	\begin{aligned}
	IR(Q, T) &= \sum_{j=1}^{m}{P(s_j) * IR(s_j,T)} \\
	& \leq P(s_i)*IR(s_i,T) + \sum_{j=1, j \neq i}^{m} {P(s_j)}  \\
	& \leq P(s_i)*IR(s_i,T) + 1 - P(s_i) 
	\end{aligned}
	\end{math}
	
	Thus, if $IR(s_i, T) < \frac{\theta -1 + P(s_i)}{P(s_i)}$, then $IR(Q,T) < \theta$, and thus trajectory $T$ can be entirely pruned.
	\label{proof2}
\end{proof}

\begin{theorem} $IR(s,T)$ must satisfy Equ (\ref{lm3}):
	\begin{equation}IR(s,T) \ge \frac{\theta -\sum_{qs \in Q\cap T,qs \neq s}{P(qs)}}{P(s)}, 
	\label{lm3}
	\end{equation}
	\label{theorem3}where $qs$ is the segment of $Q$.
\end{theorem}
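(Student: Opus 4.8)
The plan is to mimic the proof of Lemma~\ref{theorem2}, but instead of bounding the contributions of \emph{all} other segments by $1 - P(s_i)$, I would bound only the contributions of those other segments that actually intersect $T$. First I would recall from Equ~\ref{IR_T} that $IR(Q,T) = \sum_{j=1}^{m} P(s_j) \cdot IR(s_j,T)$, and split the sum into the single term for the segment $s$ in question and the rest. For a segment $qs$ of $Q$ that does \emph{not} intersect $T$, we have $IR(qs,T) = 0$ by Definition of $st\_cor$ and Equ~\ref{IR_s}, so those terms vanish. Hence only segments $qs \in Q \cap T$ with $qs \neq s$ survive, and using $IR(qs,T) \le 1$ on each of those gives $IR(Q,T) \le P(s)\cdot IR(s,T) + \sum_{qs \in Q\cap T,\, qs\neq s} P(qs)$.

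Next I would invoke the hypothesis that we are interested in candidate trajectories with $IR(Q,T) > \theta$ (the IRQ condition). Combining this with the inequality above yields $\theta < P(s)\cdot IR(s,T) + \sum_{qs \in Q\cap T,\, qs\neq s} P(qs)$, and rearranging for $IR(s,T)$ — noting $P(s) > 0$ — gives exactly Equ~\ref{lm3}. Contrapositively, if $IR(s,T)$ falls below the stated threshold, then $IR(Q,T) \le \theta$ and $T$ can be pruned, which is the operational content of the lemma.

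I expect the only subtlety — not really an obstacle — to be making precise that segments outside $Q \cap T$ contribute nothing: this relies on the remark following Equ~\ref{st_cor_s} that $st\_cor(l,T) = 0$ when $T$ does not meet $STR(l,\theta_d,\theta_t)$, together with how ``intersects'' is defined for a segment (its extended spatio-temporal range meeting $T$). One should check that a segment whose extended range misses $T$ indeed has $st\_cor(l,T)=0$ for every location $l$ in it, so $IR(s,T)=0$; this is immediate from the threshold definitions. I would also note that Equ~\ref{lm3} is strictly sharper than Equ~\ref{l2} from Lemma~\ref{theorem2}, since $\sum_{qs \in Q\cap T,\, qs\neq s} P(qs) \le \sum_{j\neq i} P(s_j) = 1 - P(s)$; so this lemma refines the previous pruning bound by exploiting knowledge of which segments the candidate actually touches. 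The whole argument is a two-line rearrangement once the ``non-intersecting segments vanish'' observation is in place, so I would keep the write-up short and parallel to \textbf{proof2}.
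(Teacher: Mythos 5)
Your proposal is correct and follows essentially the same route as the paper's own proof: split $IR(Q,T)$ into the term for $s$ plus the rest, drop the segments that do not intersect $T$ (their $IR$ is zero), bound the remaining ones by $1$, and take the contrapositive to obtain the pruning condition. The only difference is that you spell out explicitly why non-intersecting segments contribute nothing and note that the bound sharpens Lemma~\ref{theorem2}, steps the paper leaves implicit.
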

\begin{proof} 
	By combining Lemmas~\ref{Lemma1} and ~\ref{theorem2}, we have
	
	\begin{math}
	\begin{aligned}
	IR(Q, T) &= \sum_{s_j \in Q}{P(s_j) * IR(s_j,T)} \\
	& \leq  P(s_i)*IR(s_i,T) + \sum_{s_j \in Q\cap T, i \neq j} {P(s_j)},
	\end{aligned}
	\end{math}
	
	where $s_i \in Q$. Thus, if $IR(s_i, T) < \frac{\theta -\sum_{s_j \in Q\cap T, i \neq j}{P(s_j)}}{P(si)}$, then $IR(Q,T) < \theta$, namely, the trajectory $T$ can be entirely pruned.
	\label{proof3}
\end{proof}

\begin{theorem} Let $s_i$ represent a segment which intersects with the candidate trajectory $T$. Then
	
	\begin{equation}IR(s_i,T) \ge \frac{\theta -\sum_{j = 1}^ {i-1}{IR(s_j,T) * P(s_j)} - \sum_{j=i+1,s_j \in Q \cap T}^{m}P(s_j)}{P(s)} 
	\end{equation}
	\label{theorem4}
	
\end{theorem}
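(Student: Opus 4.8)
The plan is to follow the same template as Lemmas~\ref{Lemma1}, \ref{theorem2}, and~\ref{theorem3}: expand $IR(Q,T)$ over the segments of $Q$ via Equ~(\ref{IR_T}), isolate the term belonging to $s_i$, and replace each of the remaining terms by the sharpest bound currently available. The new ingredient is that, when segments are processed in index order, the exact values $IR(s_j,T)$ are already known for every $j<i$, so those terms need not be relaxed at all.

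First I would split $IR(Q,T)=\sum_{j=1}^{m}P(s_j)\,IR(s_j,T)$ into three parts: the already-computed prefix $j=1,\dots,i-1$, the current segment $j=i$, and the not-yet-computed suffix $j=i+1,\dots,m$. The prefix is kept verbatim. For the suffix I would reuse the two facts underlying Lemma~\ref{Lemma1}, namely $IR(s_j,T)\le 1$ always (since $st\_dist\le\lambda+(1-\lambda)=1$ by Equ~(\ref{st_cor_s}), hence $st\_cor$ and so $IR$ are at most $1$) and $IR(s_j,T)=0$ whenever $s_j$ does not intersect $T$. This yields $\sum_{j=i+1}^{m}P(s_j)\,IR(s_j,T)\le\sum_{j=i+1,\,s_j\in Q\cap T}^{m}P(s_j)$, and therefore
$$IR(Q,T)\le\sum_{j=1}^{i-1}P(s_j)\,IR(s_j,T)+P(s_i)\,IR(s_i,T)+\sum_{j=i+1,\,s_j\in Q\cap T}^{m}P(s_j).$$
Then, exactly as in the earlier proofs, I would take the contrapositive: if $IR(s_i,T)$ is strictly less than $\bigl(\theta-\sum_{j=1}^{i-1}IR(s_j,T)P(s_j)-\sum_{j=i+1,\,s_j\in Q\cap T}^{m}P(s_j)\bigr)/P(s_i)$, the right-hand side above is strictly below $\theta$, so $IR(Q,T)<\theta$ and $T$ may be pruned; equivalently, any surviving $T$ must satisfy the stated inequality.

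I do not anticipate a genuine obstacle: the only points requiring care are that $P(s_i)>0$, so dividing by $P(s_i)$ is legitimate (immediate from Equ~(\ref{ps})), and that the suffix relaxation is indexed over $Q\cap T$ rather than all of $Q$, so that segments known to contribute $0$ are not double-counted as contributing up to their full weight. It is also worth noting that the denominator in the statement, written $P(s)$, should be read as $P(s_i)$. This lemma is the ``online'' refinement of Lemma~\ref{theorem3} in that it tightens the bound progressively as more exact segment rates become available, which is precisely what enables early termination while scanning the segments of $Q$.
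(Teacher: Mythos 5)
Your proposal matches the paper's own proof: the same three-way split of $IR(Q,T)$ into the exact prefix $\sum_{j<i}P(s_j)\,IR(s_j,T)$, the current term $P(s_i)\,IR(s_i,T)$, and the suffix relaxed to $\sum_{j>i,\,s_j\in Q\cap T}P(s_j)$, followed by the same contrapositive pruning argument. Your added remarks (that $P(s_i)>0$ justifies the division and that the denominator $P(s)$ should be read as $P(s_i)$) are correct clarifications of minor sloppiness in the statement, but the substance is identical to the paper's proof.
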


\begin{proof} 
	By combining Lemmas~\ref{Lemma1},~\ref{theorem2} and ~\ref{theorem3}, we have 
	
	\begin{math}
	\begin{aligned}
	IR(Q, T) &= \sum_{i=1,s_i \in Q}^{m}{P(s_i) * IR(s_i,T)} \\
	& = IR(s_i,T)* P(s_i) + \sum_{j=1}^{i-1} {P(s_j)*IR(s_j,T)}  \\
	& + \sum_{j=i+1}^{m} {P(s_j)*IR(s_j,T)} \\
	& \le  IR(s_i,T)* P(s_i) + \sum_{j=1}^{i-1} {P(s_j)*IR(s_j,T)} \\
	& + \sum_{j=i+1,s_j \in Q \cap T}^{m} {P(s_j)} 
	\end{aligned}
	\end{math}
	
	Thus, if $IR(s_i,T) < \frac{\theta -\sum_{j = 1}^ {i-1}{IR(s_j,T) * P(s_j)} - \sum_{j=i+1,s_j \in Q \cap T}^{m}P(s_j)}{P(s_i)}$,  then $IR(Q,T) < \theta$. Therefore, trajectory $T$ can be entirely pruned.
	\label{proof4}
\end{proof}
\begin{figure}
	\centering
	\includegraphics[width=3.3in]{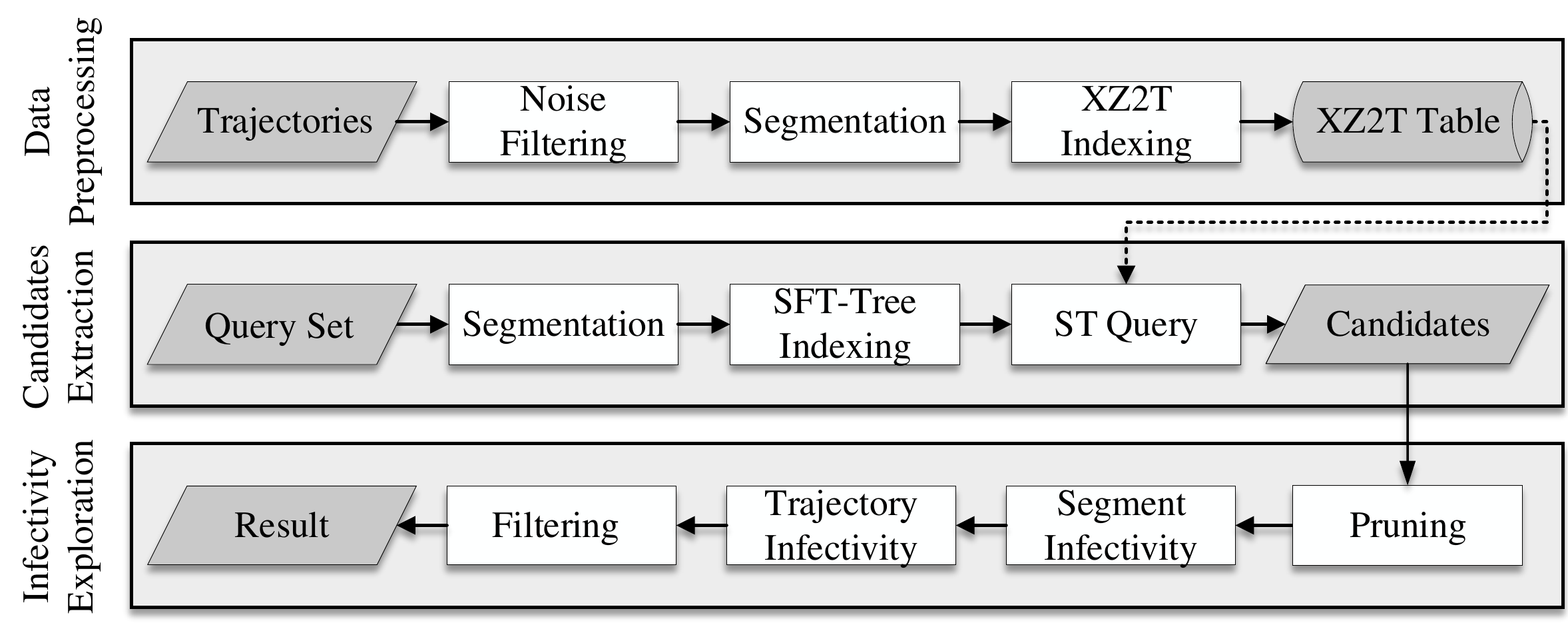}
	\caption{The structure of IRJQ.}
	\label{fig:p_fm}
\end{figure}
\begin{algorithm}[htb]
	\caption{ Infected Rate Query.}
	\label{alg:irq}
	\begin{algorithmic}[1]
		\Require
		The query trajectory, $Q$;
		The candidate trajectories, $\mathbf{T}$;
		A threshold, $\theta$;
		\Ensure 
		Trajectories with infected rate exceed $\theta$;
		\State $result = new ~ ArrayList()$; $S = segmentation(Q);$
		\For{each $T_i\in \mathbf{T}$}
		\State //Lemma 1
		\State	$sum = 0$;
		\For{each $s$ \ in $S \cap T_i$}
		\State	$sum = sum + P(s)$;
		\EndFor
		\If{$sum < \theta$}
		continue;
		\EndIf
		\State $totalIR = 0$;  $remPS = sum$; $pruned = false$;
		\For{each $s$ \ in $S$}
		\State 	$IRP = IR(s, T) * P(s)$;
		\If{$IRP  <  {\theta - 1 + P(s)}$}//Lemma 2
		\State $pruned = false$;break; 
		\EndIf
		
		\If{$s \cap T \neq \emptyset$}
		\If{$IRP  <  {\theta - (sum - P(s))}$}//Lemma 3
		\State $pruned = true$; break;
		\EndIf
		\State $remPS = remPS- P(s)$;
		\EndIf     
		\State //Lemma 4
		\If{$IRP < \theta-totalIR - remPS $}
		\State $pruned = true$; break;
		\EndIf
		\State	$totalIR = totalIR + IRP$;
		\EndFor
		\State //filtering
		\If{$pruned = false$ and $totalIR < \theta$}
		\State $result.add(T_i)$;
		\EndIf
		\EndFor
		\State 
		\Return $result$;
	\end{algorithmic}
\end{algorithm}
\subsection{Algorithm}
In Algorithm~\ref{alg:irq}, the infected rate query arguments are a query trajectory $Q$, a threshold $\theta$, and a candidate set $\mathbf{T}$, and the query result is a trajectory set of close contacts for $Q$. Initially, we let an empty ArrayList to hold the result, and $S$ is a set of segments of $Q$. Then, for each scanned trajectory $T_i$, we let $IRP = IR(s,T) * P(s)$ simplify Lemmas 2-4. The lines 5-8 for Lemma 1, lines 12-15 for Lemma 2, lines 17-19 for lemma 3, and lines 23-25 for Lemma 4. If $T_i$ does not satisfy any Lemmas 1-4, then we use the instructions (e.g., continue and break) to stop further computing $IRP$ of $T_i$ (e.g., in line 8, the sum of $P(s)$ is not greater than $\theta$, then we stop calculating $IR(Q,T_i)$ and continue to check the next trajectory $T_{i+1}$, and the $IRP$ is lower than $\theta$ in line 13, then break calculate the $IRP$ of the left segments of $Q$ and start to check next trajectory). If $T_i$ is not pruned, then it is added to the result in line 30. Then in line 33 of Algorithm~\ref{alg:irq} , we return the final result after all trajectories of $\mathbf{T}$ have been checked and calculated.

\section{Join solution}

\subsection{Basic Idea}
\begin{figure*}
	\centering
	\includegraphics[width=6.8in]{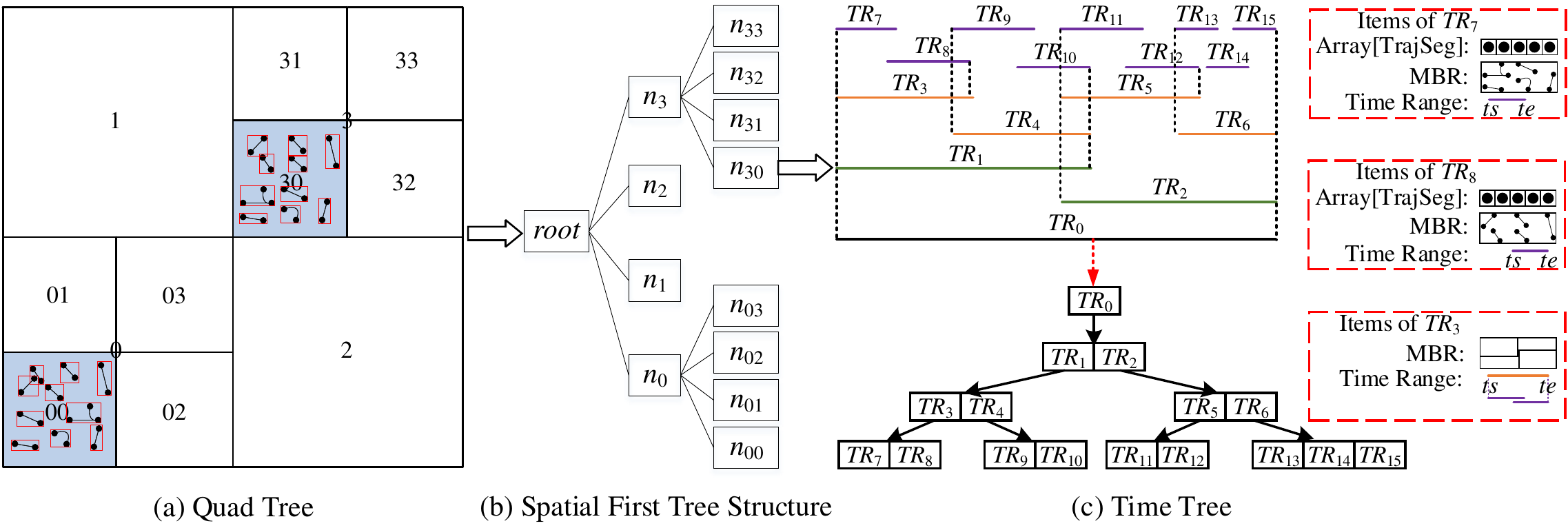}
	\caption{A sample of SFT index.}
	\label{fig:sft_index}
\end{figure*} 
To process plenty of trajectories infected rates, we develop a join query solution, named Infected Rate Join Query (IRJQ). Figure~\ref{fig:p_fm} depicts the architecture of IRJQ, which consists of three processes: Data Preprocessing, Candidates Extraction, and Infectivity Exploration.

\textbf{Data Preprocessing.} As depicted in the top-most box of Figure~\ref{fig:p_fm}, we first filter noise locations of trajectories out. After that, we extract segments of trajectories by the rules described in Figure~\ref{fig:segment}. Later, we use the XZ2T index to organize segments and then store them into the NoSQL database via JUST. The details of the steps in data preprocessing are described in Sections 3.1 and 3.2.  

\textbf{Candidates Extraction.} The middlebox of Figure~\ref{fig:p_fm} shows the procedure of candidates extraction, which allocates the coarse-grained high infect rate candidates to the query set. First, we load the query set; Second, we detect segments of this set; Third, we build an SFT-Tree (Spatial First Time Tree) index for all segments, which can minimize queries and communication on a small of data redundancy. Then, we query the candidates covered by the spatio-temporal ranges of leaf nodes in the SFT-Tree. This procedure will be a detailed introduction in Section 5.2.

\textbf{Infectivity Exploration.} The bottom-most box of Figure~\ref{fig:p_fm} shows the procedure of infectivity exploration. First, we prune some candidates, which are not the close contacts; Then, we calculate the infectivity of the segment. Meanwhile, we use a map to record some trajectory that can be further pruned. Later, we group all infectivity of the same trajectory's segments, and calculate the trajectory infectivity; Last, we filter trajectories whose infectivity is lower than $\theta$. This procedure is described in detail in Section 5.3.

\subsection{Candidates Extraction}
Candidates Extraction procedure is based on the segmentation algorithm, an SFT index strategy, and spatio-temporal query. The goal of segmentation is to diminution the long and large query trajectories on several individual segments.  We built an SFT index on all query segments, where segments with similar spatio-temporal range are placed in the same SFT leaf node. Then, we extract candidate segments from the database by the extended spatio-temporal range of each leaf node of SFT. Finally, we get coarse-grained candidates from the database with low data redundancy and a small I/O consumption.

\textbf{SFT index.} SFT (Spatial first Time index) is a two layers index. The segments of the query set are allocated to the leaf node of the SFT index with a suitable spatio-temporal region. Dividing segments into different spatio-temporal ranges can decrease I/O cost and reduce the size of the candidate set when query. As shown is Figure~\ref{fig:sft_index}, the SFT index structure is established as follows. First, we divide the spatial domain into four equal-sized regions, numbered from 0 to 3. Next, each region is recursed until reaching the maximum resolution. Then we build a time tree for each leaf node of the spatial first tree. As depicted in Figure~\ref{fig:sft_index} (c), data in each spatial region is indexed in time using a one-dimensional R-Tree-like structure~\cite{guttman1984r}. The internal node of the time tree has a one-dimensional time range and the MBR of all the leaf nodes it contains. The leaf node store the segments and the spatio-temporal range of all the segments. 

For each segment, there is data with spatial and temporal information. As shown in Figure~\ref{fig:sft_index} (a), we first allocate the segments whose lower points of their MBR are located in the same node of the quadtree to the leaf node. Then, for segments in the same spatial node, we construct the T-Tree according to the start times of their time ranges. The index of the time layer is shown in Figure~\ref{fig:sft_index} (c). In the actual construction process, in order to reduce the size of the T-Tree, we  improve its insertion and add the function of merging consecutive time ranges. Assuming that the time range $[t_1,t_2]$ has reached T in T-Tree, a new time range $[t_3,t_3]$ also reaches T, and the two time ranges cross, then we combine $[t1, t2]$ and $[t3, t4]$ together, delete the $[t1, t2]$ node and insert the $[t1, t4]$ node, which not only reduces the size of the T-Tree, but also facilitates merging time ranges. 

When the data in a node of T-Tree reaches the threshold, it will be split, e.g., $TR_7$ and $TR_8$ are crossed in time, but their merged time range is more extensive than the threshold. Thus they are regarded as two leaves, respectively. The T-Tree split will be re-divided into two parts according to the new time range, and then the T-Tree index will be rebuilt.

\textbf{ST Query.} After building the SFT index, the segments of the query set are distributed in leaf nodes of the SFT index with individual spatio-temporal range. Therefore, we start with the root node, and when accessing the leaf node, we extract its spatio-temporal range and expand this range to the infection range, and then query candidates from the database like Section 4.2.2. 
\begin{algorithm}[htb]
	\caption{ Infected Rate Join Query.}
	\label{alg:irjq}
	\begin{algorithmic}[1]
		\Require The query set $\mathbf{Q}$, A threshold $\theta$.
		\Ensure  All trajectory pairs form $\mathbf{Q}$ and database whose infectivity exceeds $\theta$ 
		\State $sft = new~SFT(); ~//new~ index$
		\For{each $Q \in \mathbf{Q}$}
		\State $S = segmentation(Q)$;
		\State $sft.insert(S)$
		\EndFor
		\State $removed = new~Map()$,$remain = new~Map()$;
		\State $result= new~ArrayList()$;
		\State search($sft.root$, $removed$,$remain$,$result$);
		\State $result.filter(removed, remain)$; //filtering.
		\State $finalResult =result.reduceByKey(v_1+v_2)$;
		\State \Return $finalResult.filter(v \ge \theta)$;
	\end{algorithmic}
\end{algorithm}

\subsection{Infectivity Exploration} 
The infected rate join query needs to return all pairs of two sets. A trajectory $Q_i$ of the query set $\mathbf{Q}$ are divided into several segments distributed in leaf nodes of the SFT index. In Section 5.2, we have extracted candidates for each leaf node. Thus, we should first calculate the segment infected rate in each node where $Q_i$ has segment located. Then, we merge all the segment infected rates between $Q_i$ and other trajectories, respectively. However, the calculate of the infected rate is time-consuming. Therefore, we use some pruning strategies, which do not need to calculate the infected rate of all candidates of $Q_i$ to exclude those trajectories that are not close to $Q_i$ in spatial and temporal. 

\textbf{Pruning.} In each leaf node with $Q_i$, we first filter segments for $Q_i$ whose spatial or temporal distance exceeds $\theta_d$ or $\theta_t$. The pruning strategy Lemma~\ref{theorem2} can also be used to the infected rate join query, but we need to use two maps to record which trajectories have been cropped and which trajectories can be used for further calculations. Then, we only need to calculate the trajectories in the remaining map for $Q_i$.

\begin{algorithm}[htb]
	\caption{Extract candidates and pruning function.}
	\label{alg:search}
	\begin{algorithmic}[1]
		\Require A node of SFT index $node$; $removed$ records the pruned trajectory pair; $remain$ records remain candidate trajectories; $result$ records the value of segments IRP.
		\Ensure
		\If{$node.type = LEAF$}
		\State $candidates = st\_query(node.mbr, node.tr, \theta_d,\theta_t)$
		\For{each $s \in node.data$}
		
		\For{each $c \in candidates$}
		\State // $st\_filter$ return true when the spatial or temporal distance exceeds $\theta_d$ or $\theta_t$, else return false
		\If{$st\_filter(s,c)$} 	continue;
		\EndIf
		\If{$remain \neq \emptyset$ and $ !remain.has(s.id, c.id)$}
		continue;
		\EndIf
		
		\If{$remove.has(s.id, c.id)$}
		continue;
		\EndIf
		
		\State $IRP = IR(s, c) * P(s);$
		\If{$IRP < \theta - 1 + P(s)$}  // Lemma 2
		\If{$remain \neq \emptyset$}
		\State $remain.remove(s.id, c.id)$;
		\State continue; 
		\EndIf
		\State $remove.put(s.id, c.id)$
		\State continue; 
		\EndIf
		\State $remain.put(s.id, c.id)$;
		\State $result.add((s.id, c.id),IRP)$;
		\EndFor
		\EndFor
		\Else{}// search the children nodes.
		\State search(node.ne);search(node.se);
		\State search(node.sw);search(node.nw);
		\EndIf	
	\end{algorithmic}
\end{algorithm}
\subsection{Algorithm}
The pseudocode of IRJQ is shown in Algorithm~\ref{alg:irjq} and  Algorithm~\ref{alg:search}. The query arguments are the query set $\mathbf{Q}$ and a threshold $\theta$ in Algorithm~\ref{alg:irjq}. We first build an SFT index $sft$ for segments of the query set $\mathbf{Q}$ in lines 1-5 of  Algorithm~\ref{alg:irjq}. Then, we search from the root node of $sft$ through depth-first strategy. The search processing is shown in Algorithm~\ref{alg:search}.  In lines 1-24 of Algorithm~\ref{alg:search}  calculate the $IR$ between segments and candidates in the leaf node. If the visited node is not the leaf node, then search the child nodes while it has, as shown in lines 26-27. In line 2, we call spatio-temporal query to extract candidates from the basic database. For each segment $s$ in the leaf node, we scan candidates one by one. If the spatio-temporal of any candidate $c$ is not covered by $s$, then skip $c$. line 12, we calculate $IRP =IR(s,c) * P(s)$. In line 13, we judge $IRP$ by lemma 2. If it lower than $\theta - 1 + P(s)$, then update $remain$ and $removed$ map and continue to check the next candidate, else we put the candidate in $remain$ map and record $IRP$ for $(s.id,c.id)$ in $result$. After visiting all leaf nodes, in line 9 of Algorithm~\ref{alg:irjq}, we refine the result, where candidates in $removed$ map or not in $remain$ map will be pruned. In lines 10-11, we reduce the result by key (key = $s.id$ + $c.id$) and then filter the final results out whose value are lower than $\theta$.

\section{EXPERIMENTS}
We have implemented our algorithms and conducted extensive experiments on real and synthetic spatial data sets to verify our proposed techniques. 
\setlength{\tabcolsep}{0.1em}
\begin{table}
	\centering
	\caption{Trajectory Data Sets}
	\begin{tabular}{|c|c|c|c|c|c|}
		\hline
		Attributes & MPG & MPG2 & MPG3 & MPG4 & MPG5 \\ \hline
		\# Points & 3,079,428 & 16,397,140 & 19,476,568 & ... & 25,635,424 \\ \hline
		\# Traj. & 160,840 & 33,680 &  194,520 & ... & 516,200 \\ \hline
	\end{tabular}
	\label{datasets}
\end{table}
\setlength{\tabcolsep}{0.7em}
\begin{table}
	\centering
	\caption{Default Parameters}
	\begin{tabular}{|c|c|c|c|c|c|} \hline
		$\lambda$ &	$\theta$ &$\theta_d $ &$\theta_t$ &Resolution & Query Size \\ \hline
		0.5 & 0.5 & 50m & 120s & 15 & 2,300 Traj.\\
		\hline\end{tabular}
	\label{def_param}
\end{table}
\begin{table}
	\centering
	\caption{Pruning Effectiveness \hspace{\fill} (Unit: ms)}
	\begin{tabular}{|c|c|c|c|c|} \hline
		&IRQ &	IRQ\_UP&IRJQ &IRJQ\_UP \\ \hline
		MPG & \textbf{701} & 963 & \textbf{215} & 260  \\ \hline
		MPG5 & \textbf{2647} & 3182 & \textbf{318} & 394  \\
		\hline\end{tabular}
	\label{pruning_eff}
\end{table}

\begin{figure*}[htbp]
	\centering

	\subfigure[Runtime of MPG.]{
		\begin{minipage}[t]{0.25\linewidth}
			\centering
			\includegraphics[width=1\textwidth]{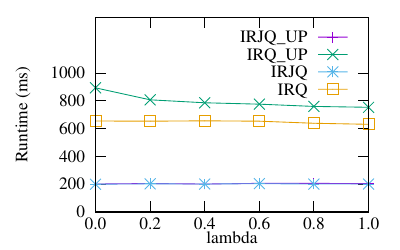}
		\end{minipage}%
	}%
	\subfigure[Runtime of MPG5.]{
		\begin{minipage}[t]{0.25\linewidth}
			\centering
			\includegraphics[width=1\textwidth]{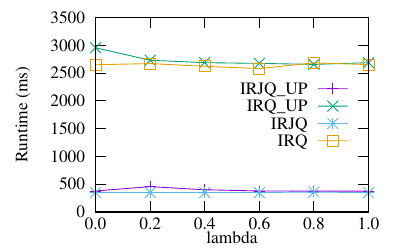}
		\end{minipage}%
	}%
	\subfigure[Accuracy of MPG.]{
		\begin{minipage}[t]{0.25\linewidth}
			\centering
			\includegraphics[width=1\textwidth]{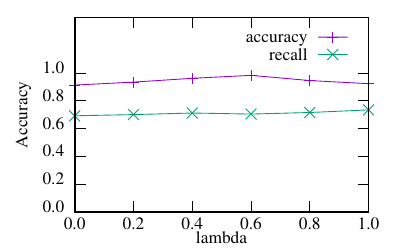}
		\end{minipage}%
	}%
	\subfigure[Accuracy of MPG5.]{
		\begin{minipage}[t]{0.25\linewidth}
			\centering
			\includegraphics[width=1\textwidth]{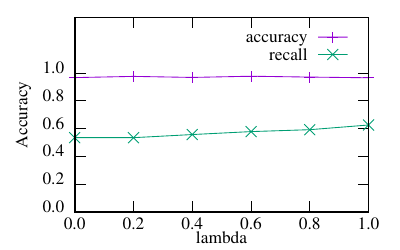}
		\end{minipage}%
	}%
	\caption{The effect of $\lambda$.}
	\label{fig:lambda}
\end{figure*}

\begin{figure*}[htbp]
	\centering
	\subfigure[Runtime of MPG.]{
		\begin{minipage}[t]{0.25\linewidth}
			\centering
			\includegraphics[width=1\textwidth]{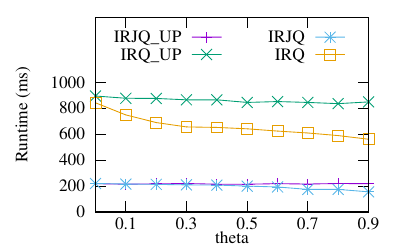}
		\end{minipage}%
	}%
	\subfigure[Runtime of MPG5.]{
		\begin{minipage}[t]{0.25\linewidth}
			\centering
			\includegraphics[width=1\textwidth]{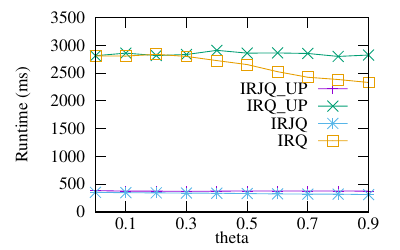}
		\end{minipage}%
	}%
	\subfigure[Accuracy of MPG.]{
		\begin{minipage}[t]{0.25\linewidth}
			\centering
			\includegraphics[width=1\textwidth]{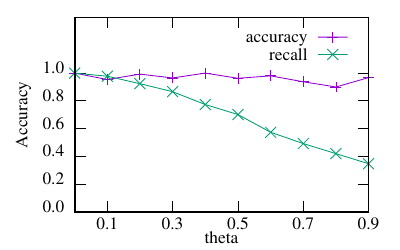}
		\end{minipage}%
	}%
	\subfigure[Accuracy of MPG5.]{
		\begin{minipage}[t]{0.25\linewidth}
			\centering
			\includegraphics[width=1\textwidth]{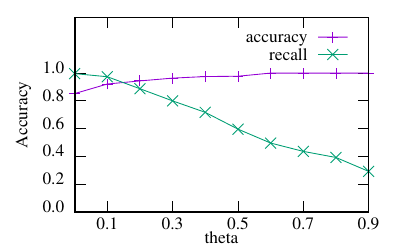}
		\end{minipage}%
	}%
	\caption{The effect of $\theta$.}
	\label{fig:theta}
\end{figure*}
\textbf{Datasets.} To evaluate the efficiency and correctness, we use the GPS records from the mobile phones, \textbf{MPG}~\footnote{http://suo.im/69LJCp} and Synthetic data sets (MPG2, MPG3, MPG4, MPG5) which are generated by copying \& offsetting one to four times of \textbf{MPG} to test the scalability of our solutions. As shown in Table~\ref{datasets}, there are 160,840 trajectories in \textbf{MPG}, with an average of 19 points. The query trajectory set$^1$ is a labeled down-sampled data set of the \textbf{MPG}.

\textbf{Setting.} All of the algorithms were implemented in Java and Scala. All the experiments were conducted on a cluster of 5 nodes, with each node equipped with CentOS 7.4, 8-core CPU, 32GB RAM, and 1T disk. In our experiments, we compare the run time and veracity of IRQ and IRJQ. The run time is the average query time. The accuracy is the portion of correctly labeled trajectories in the query result, and the recall is the number of query trajectories whose query results are not empty. We analyze the effect of preference parameter $\lambda$, precision threshold $\theta$, distance threshold $\theta_d$, time threshold $\theta_t$, query data size, and source data size for IRQ and IRJQ. We also verify the effect of the different resolution of quadtree on the IRJQ algorithm. Table~\ref{def_param} gives the default parameters. 

\subsection{Pruning Effectiveness}
We first analyze the pruning effectiveness of our algorithms using the default parameters. The experimental results are shown in Table~\ref{pruning_eff}.  We can see that the IRQ and IRJQ have better performance than the unpruned algorithms IRQ\_UP and IRJQ\_UP. The join algorithms (e.g., IRJQ and IRJQ\_UP) outperform simple query algorithms (IRJQ and IRJQ\_UP) by almost an order of magnitude on MPG and MPG5. After pruning, IRQ saves 27\% and 17\%, and IRJQ improves 27\% and 29\% query time on the sets of MPG, respectively.
\begin{figure*}[htbp]
	\centering

	\subfigure[Runtime of MPG.]{
		\begin{minipage}[t]{0.25\linewidth}
			\centering
			\includegraphics[width=1\textwidth]{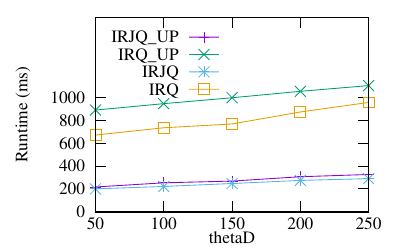}
		\end{minipage}%
	}%
	\subfigure[Runtime of MPG5.]{
		\begin{minipage}[t]{0.25\linewidth}
			\centering
			\includegraphics[width=1\textwidth]{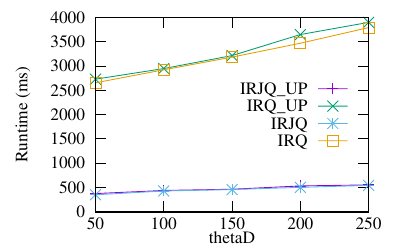}
		\end{minipage}%
	}%
	\subfigure[Accuracy of MPG.]{
		\begin{minipage}[t]{0.25\linewidth}
			\centering
			\includegraphics[width=1\textwidth]{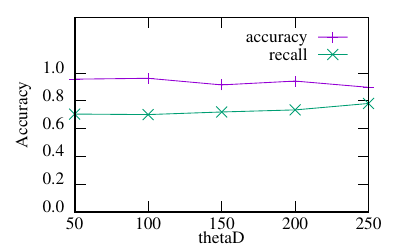}
		\end{minipage}%
	}%
	\subfigure[Accuracy of MPG5.]{
		\begin{minipage}[t]{0.25\linewidth}
			\centering
			\includegraphics[width=1\textwidth]{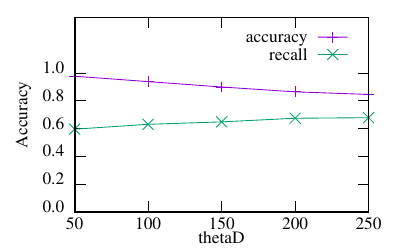}
		\end{minipage}%
	}%
	\caption{The effect of $\theta_d$.}
	\label{fig:thetaD}
\end{figure*}
\begin{figure*}[htbp]
	\centering

	\subfigure[Runtime of MPG.]{
		\begin{minipage}[t]{0.25\linewidth}
			\centering
			\includegraphics[width=1\textwidth]{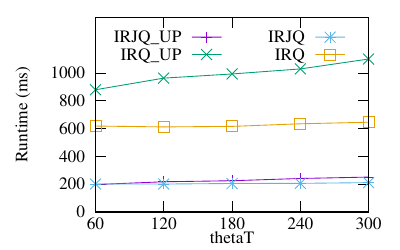}
		\end{minipage}%
	}%
	\subfigure[Runtime of MPG5.]{
		\begin{minipage}[t]{0.25\linewidth}
			\centering
			\includegraphics[width=1\textwidth]{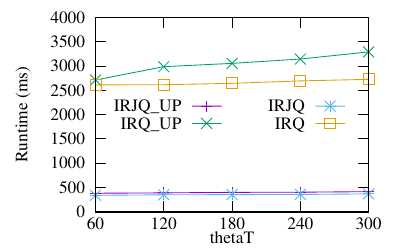}
		\end{minipage}%
	}%
	\subfigure[Accuracy of MPG.]{
		\begin{minipage}[t]{0.25\linewidth}
			\centering
			\includegraphics[width=1\textwidth]{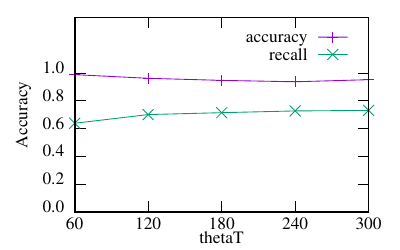}
		\end{minipage}%
	}%
	\subfigure[Accuracy of MPG5.]{
		\begin{minipage}[t]{0.25\linewidth}
			\centering
			\includegraphics[width=1\textwidth]{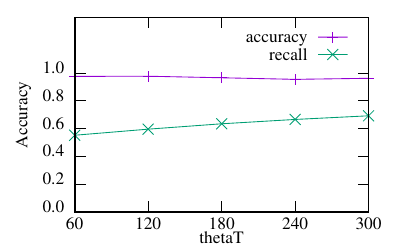}
		\end{minipage}%
	}%
	\caption{The effect of $\theta_t$.}
	\label{fig:thetaT}
\end{figure*}
\subsection{Effect of Preference Parameter $\lambda$}
Figure~\ref{fig:lambda} shows the effect of the preference parameter $\lambda$ on efficiency and accuracy. On the dataset MPG, the runtime of IRJQ and IRJQ\_UP is around 200ms and not affected by the varying of $\lambda$. IRQ is more effective than IRQ\_UP, and saves about 200ms compare to IRQ\_UP. On the dataset MPG5, the runtime of IRJQ is around 400ms and much faster than IRQ. Then we discuss the recall and accuracy under the influence of $\lambda$. We only analyze the recall and accuracy of IRQ because IRQ and IRJQ have the same accuracy. The accuracy of IRQ varies with $\lambda$, which is lower at $\lambda=0.0$ and $\lambda=1.0$ on dataset MPG because only the side affected by spatial or temporal is considered. As the number of candidates from the spatio-temporal query on the database does not change but the final result has a little difference. Thus the recall rate remains at around 0.7 on the dataset MPG. However, on the dataset MPG5, the recall increases with $\lambda$, because more trajectories are queried in the same spatial region, which increases the number of the final results.
\subsection{Effect of Threshold $\theta$}
Figure~\ref{fig:theta} shows the effect of precision threshold $\theta$. On the datasets of MPG and MPG5, the runtime of pruned algorithms (IRQ and IRJQ) decreases with an increasing threshold $\theta$ and unpruned algorithms (IRQ\_UP and IRJQ\_UP) keep the approximately same value, which verify the efficiency of our pruning strategies. The recall decreases with the threshold because the larger $\theta$, the fewer trajectories are satisfied. Meanwhile, the accuracy rate still maintains at a relatively high value. 

\subsection{Effect of Distance Threshold $\theta_d$}
Figure~\ref{fig:thetaD} shows the effect of distance threshold $\theta_d$. The running time increases as  $\theta_d$ increases because both the candidates and the computation are increased. IRJQ's runtime only increases slightly, but the increase in IRQ is pronounced. The spatio-temporal ranges infected by every location are enlarged with the increase of $\theta_d$. Thus more candidates are selected in our algorithms, which increases the recall.

\subsection{Effect of Time Threshold $\theta_t$}
Figure~\ref{fig:thetaT} shows the effect of distance threshold $\theta_t$. The running time slightly increases as $\theta_t$ increases. The temporal range expands as $\theta_t$ increases, but the data covered by each location does not increase significantly. Therefore, the running time maintains at a relatively fixed value, and the recall increases with a small numerical.

\subsection{Effect of Query Size}
Figure~\ref{fig:query_size} shows the effect of query size. The query trajectories from 460 to 2300, represented as 20\% to 100\%, respectively. Note that unlike other times, the runtime here is the total time to complete the calculation of all trajectories. We can see that as the amount of data for query trajectories doubles, the cumulative runtime of IRQ and IRQ\_UP almost doubles. Consequently, the total runtime of IRJQ and IRJQ\_UP is still below 500 seconds, because we build an efficient index on trajectories that effectively reduce the number of time to access the database and greatly avoid data redundancy.
\begin{figure}
	\centering
	\includegraphics[width=2.0in]{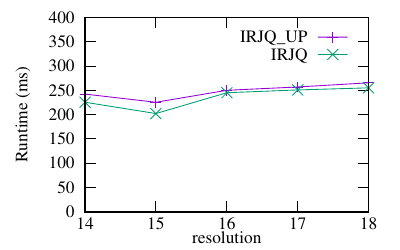}
	\caption{The effect of resolution.}
	\label{fig:resolution}
\end{figure}

\subsection{Effect of Resolution}
In Figure~\ref{fig:resolution}, we analyze the impact of the resolution of the SFT index. We see that the resolution equals 15 performs better than other resolutions. When the resolution is equal to 14, the spatial-temporal range distribution in the leaf nodes is more dispersed. Thus the spatial and temporal range of a single query will be relatively large, which will cause more data redundancy and some unnecessary data. When the resolution is greater than 15, although the distribution of the spatio-temporal range in the leaf node is very concentrated, it also means that it needs to query the database more times, which increases the I/O overhead.

\subsection{Scalability}
Figure~\ref{fig:data_size} shows the effect of basic dataset size. We generate five datasets with equal increments in turn, as shown in Table~\ref{datasets}. We represent MGP as 20\%, MPG2 as 40\%, MPG3 as 60\%, MPG2 as 80\%, and MPG5 as 100\%, respectively. We see that the increase of the data set has a great impact on the IRQ, which owns to spatio-temporal query gets several times of the candidate set, which increases the calculations. Although IRJQ's query time has also increased, it has not increased exponentially. It is attributed to the SFT index, which makes the close spatio-temporal ranges query only once on the database.

\begin{figure*}[htbp]
	\centering 
	\begin{minipage}[b]{0.48\textwidth} 
		\centering

		\subfigure[Runtime of MPG.]{
			\begin{minipage}[t]{0.5\linewidth}
				\centering
				\includegraphics[width=1\textwidth]{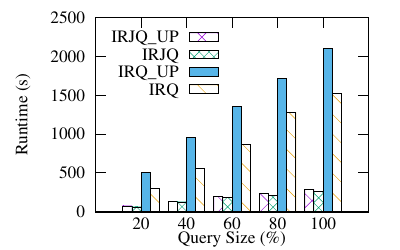}
			\end{minipage}%
		}%
		\subfigure[Accuracy of MPG5.]{
			\begin{minipage}[t]{0.5\linewidth}
				\centering
				\includegraphics[width=1\textwidth]{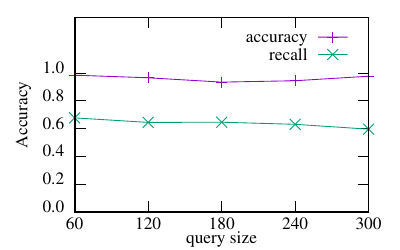}
			\end{minipage}%
		}%
		\caption{The effect of query size.}
		\label{fig:query_size}
	\end{minipage}
	\begin{minipage}[b]{0.48\textwidth} 
		\centering

		\subfigure[Runtime of MPG.]{
			\begin{minipage}[t]{0.5\linewidth}
				\centering
				\includegraphics[width=1\textwidth]{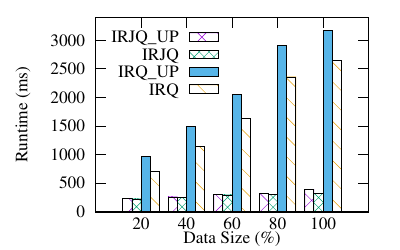}
			\end{minipage}%
		}%
		\subfigure[Accuracy of MPG5.]{
			\begin{minipage}[t]{0.5\linewidth}
				\centering
				\includegraphics[width=1\textwidth]{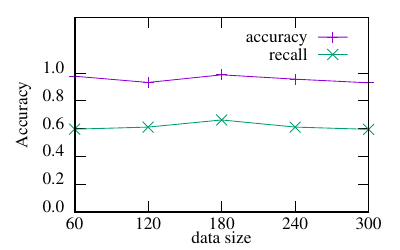}
			\end{minipage}%
		}%
		\caption{The effect of data size.}
		\label{fig:data_size}
	\end{minipage}
\end{figure*}

\section{Related Work}
\subsection{Trajectory Correlation Metrics}
Many trajectory related metrics have been proposed~\cite{toohey2015trajectory,assent2009anticipatory,deng2014massjoin,yu2016string,shang2017trajectory,ding2008efficient,lin2005shapes,zheng2013towards,rong2017fast,wang2013effectiveness,magdy2015review}, which can be roughly classified as two types: (1) The point-based metrics, such as the Euclidean distance (ED)~\cite{faloutsos1994fast}, Dynamic Time Warping (DTW)~\cite{assent2009anticipatory} and  Fréchet~\cite{toohey2015trajectory}; and (2) The segment based metrics, such as the metric in ~\cite{xie2017distributed} and Longest Common Subsequence (LCSS)~\cite{zhu2017trajectory}. In general, the above methods all treat temporal attributes as simple time series. The temporal attribute is one of the important attributes. Combining the spatial and temporal attributes can be used as a judgment criterion for the correlation measure of trajectory.
\\ \textbf{Point-Based Metrics.} Correlation measurement methods based on trajectory points can be further divided into global matching methods and local matching methods. Euclidean distance~\cite{faloutsos1994fast} is mainly by calculating the Euclidean distance between corresponding points between trajectories, and then accumulating the sum as the final metric value. Euclidean distance is the simplest. It only needs to be summed up, and the time complexity is O(N). However, it also has serious limitations: (1) The sampling rate and trajectory points must be consistent; (2) The principle of monotonic continuity must be met, and local time distortion is not supported; (3) Sensitive to noise. The Fréchet distance~\cite{toohey2015trajectory} measure was proposed by Fréchet et al. It is usually described intuitively: the dog rope distance when a person walks a dog. DTW~\cite{assent2009anticipatory} locally stretches or scales the trajectories, so that trajectories of different sampling rates and different lengths can be compared. The DTW distance is the cumulative sum of the distances between all the optimal matching trajectory points. 
\\ \textbf{Segment-Based Metrics.} Based on the trajectory segment similarity measures, by segmenting the trajectories and comparing the similarity of each segment separately, the time complexity is greatly reduced. However, the local information of the trajectory is not fully considered. Thus the accuracy is relatively low. Longest Common Subsequence (LCSS) mainly considers similar parts between trajectories as a measure of trajectory correlation, so it can skip some trajectory points due to matching distance exceeding the threshold, which makes it robust to noise. In~\cite{xie2017distributed}, the trajectory is divided into several segments, and then the divide-and-conquer strategy is used to calculate the discrete segment Hausdorff or discrete segment Fréchet distance. However, the distance measures they use do not adequately describe spatial and temporal proximity.

Our metric takes into account the spatial and temporal correlation of each location at the same time and uses a divide-and-conquer algorithm for the weighted trajectory segments. It is not only appropriately describes the spatio-temporal closed trajectories, but also greatly reduces the complexity.
\subsection{Trajectory Correlation Search.}
Trajectory correlation searches are widely studied~\cite{shang2017trajectory,shang2018parallel,ta2017signature,xie2017distributed,rong2017fast}. The procedure typically involves a definition step and a query processing step. First, a metric is defined to measure the spatial and temporal correlations between two trajectories. Second, an efficient strategy is developed to search spatiotemporally close trajectories for a query trajectory. For example, the BCT~\cite{zheng2017searching} algorithm proposed by Zheng uses Euclidean distance for trajectory search. Rong et al.~\cite{rong2017fast} proposed a similar measure for trajectory segments and used a distributed framework. This framework first divides trajectories into several segments and then groups nearby segments to find common trajectories, which is helpful to reduce I/O consumption. Shang et al.~\cite{shang2017trajectory} presented a two-phase divide-and-conquer trajectory similarity join framework. It first finds similar trajectories for each trajectory. Then it merges the results to the final result. Like~\cite{rong2017fast}, we proposed a clearer metric, and then grouped segments that are closed spatially and temporally with an efficient Spatial First Time (SFT) index.
Furthermore, in order to speed up query time, many frameworks have designed effective pruning strategies, which reduce the search space. Many studies~\cite{shang2017trajectory,shang2018parallel} have analyzed the low bounds of their metrics to reduce search nodes. In our algorithm, we propose four effective prune strategies to avoid unnecessary calculations.
	
\section{CONCLUSION}
In this paper, we studied a novel trajectory infection rate based on spatio-temporal correlation, the goal of which is to detect suspected infected crowds of COVID-19 and also targets many applications such as close contacts detection, companion mining, high-risk groups prediction and trajectory similarity recommendation. We proposed a new trajectory metric that accommodates misaligned trajectory points. We first broke down the longer and larger trajectories into several short and suitable segments and used an active spatio-temporal index (XZ2T) to manage a large number of segments in the NoSQL database, which reduce memory consumption and guarantee the scalability by avoid loading all trajectories into memory. We then developed efficient algorithms for segments with infected weight. We explored several pruning strategies for our proposed algorithms to avoid many calculations. For batch query, we designed an SFT index that groups similar segments only once to access the database to reduce I/O communication and data redundancy. We then devised experimental studies on real and synthetic datasets to verify the effectiveness and efficiency of our algorithms.

Many exciting directions for future research exist. First, it is significant to extend our algorithms for supporting top-$k$ close contacts query without a threshold $\theta$. Second, it is vital  to use some sample data to determine the parameters to be set in the algorithm.
	
\section{ACKNOWLEDGEMENT}
This work was supported by the National Key R\&D Program of China (2019YFB2101800).
	
	
	\balance
	

	\bibliographystyle{abbrv}
	\bibliography{main} 

\end{document}